\newtheorem{theorem}{Theorem}[section]
\newtheorem{lemma}[theorem]{Lemma}
\theoremstyle{definition}
\newtheorem{remark}{Remark}
\def\be{\begin{equation}}
\def\ee{\end{equation}}
\def\bea{\begin{eqnarray}}
\def\eea{\end{eqnarray}}
\newcommand{\R}{{\mathbb R}}
\author[Hari Nortunen and Mikko Kaasalainen]{}
\email{hari.nortunen@tut.fi}
\email{mikko.kaasalainen@tut.fi}
\title[Distributions from random projections]{Shape and spin distributions of large object populations from random projection areas}
\subjclass{65C20, 62-07, 49N45, 65J22, 85-08}
\keywords{Inverse problems, distribution functions, projections}
\begin{document}

\maketitle
\medskip
\centerline{\scshape Hari Nortunen}
\medskip
%{\footnotesize
% please put the address of the first author
% \centerline{Department of Mathematics}
%   \centerline{Tampere University of Technology}
%   \centerline{PO Box 553, 33101 Tampere, Finland}
%}
\medskip
\centerline{\scshape Mikko Kaasalainen}
\medskip
{\footnotesize
% please put the address of the first author
 \centerline{Department of Mathematics}
   \centerline{Tampere University of Technology}
   \centerline{PO Box 553, 33101 Tampere, Finland}
}
\bigskip
\begin{abstract} 
We model the shape and spin characteristics of an object population when there are not enough data to model its single members. The data are random projection areas of the members. We construct a mapping $f(x)\rightarrow C(y)$, $x\in\R^2$, $y\in\R$, where $f(x)$ is the distribution function of the shape elongation and spin vector obliquity, and $C(y)$ is the cumulative distribution function of an observable $y$ describing the variation of the observed projection areas of one member, and show that the mapping is invertible. Using the projected area of an ellipsoid as our model, we obtain analytical basis functions for a function series of $C(y)$ and prove uniqueness and stability properties of the inverse problem. Even though the model error is considerably larger than the measurement noise for realistic cases of arbitrary shapes (such as asteroids), the main characteristics of $f(x)$ (such as the locations of peaks) are robustly recovered from the data.
\end{abstract}
\section{Introduction}

The goal of this paper is to model the characteristics of a population when there are not enough data to 
model its single members. Our case study is the distribution of the shape and spin characteristics of a large number
of rotating objects in $\R^3$ when we have random observations of the areas of their projections in different viewing geometries. This setup corresponds to the sporadic observations of the brightnesses of the asteroids in our solar system.

In the following, we consider some choices of observables and the 
corresponding models, cumulative distribution functions (CDF), and
distribution functions (DF).
With DFs, we essentially take each observation to be an identical procedure,
a repeated sample of the
distribution. The targets lose their identity: 
observations of the same target at various times
can be taken as independent samples of the DF.  Our setup belongs to the general class of problems of the relation between some model DF $f(x)$, $x\in\R^n$, and the corresponding distribution $g(y)$ in some space of observables $y\in\R^k$, when the data are samples of $g$. In the multidimensional case, this can be solved with likelihood methods or likelihood-free inference (e.g., \cite{dyn}), but if $y\in\R$ as here, the case is easier as the samples can be examined by forming a single CDF.

We use CDFs of observables as they are well-defined non-binned
directly measurable distribution quantities. Thus we proceed by computing
the model CDF from the model DFs and comparing it with the data CDF.
Our model CDF is determined by calculating by integration how many model configurations
can contribute to the observed CDF at each value of the observable. The main problem is
whether the thus obtained mapping
$$
f(x)\rightarrow C(y),\quad f\ge 0,\,x\in\R^n,\,0\le C\le a,\,y\in\R,
$$
is invertible. Here $f$ is the DF of some intrinsic properties $x$, and $C$ is the CDF of some observable $y$.

The physical realization of our case study was originally introduced by Szab\'o et al. \cite{Szabo}. Their study did not contain any analytical or numerical inspection of the generic inverse problem, especially its uniqueness and stability properties. Their examination included over $10^4$ pieces of actual asteroid data, and they concluded that while the data are insufficient for obtaining the properties of individual bodies, a statistical analysis is possible. The asteroid population was treated as a distribution function. The approach of \cite{Szabo} was innovative, and we aim to expand upon their study by investigating the inverse problem mathematically, including the role of the insufficient model and other assumptions that do not necessarily hold in practice (such as the distribution of spins that was fixed in \cite{Szabo}). 

We present both a theoretical analysis and numerical simulations. We generate cumulative distribution functions of large asteroid population, aiming to study what the CDF reveals about the properties of the population. To keep our model simple and solvable, we choose to utilize as few parameters as possible. First we consider the shape elongation $p$ as our only parameter, while the spin latitude $\beta$ remains uniformly distributed. Then we move on to a more advanced case where the distributions of both $p$ and $\beta$ are to be solved. We show that  it is possible to obtain information about the $\beta$ distribution, which was missing from \cite{Szabo}. 

\section{Observables and forward problem}

Our model shape is the triaxial ellipsoid, since it has a particularly simple analytical expression for the area of its projection in any given viewing direction \cite{con}. In this paper, we use the terms brightness and projection area interchangeably, because they are physically almost the same (up to a scaling factor) for dark targets when the viewing and illumination directions coincide \cite{genproj}. We further simplify the model
(semiaxes $a,b,c$) with $b=c=1$.  Naturally this would be a coarse
shape approximation for individual targets of general shape, but even if our model is actually not very realistic in practice,
it should portray some coarse-scale population tendencies right when we have many
observations. Thus it suffices to have a model that represents
the effects of shape elongation and spin direction in a roughly correct manner.

\subsection{Amplitudes $A$}

First we consider the theory that would hold if our analytical shape model were correct. Then, from Sect.\ \ref{sec:asteroid} on, we discuss the consequences caused by the incorrect model by numerically computing the brightnesses of general shapes, obtaining the brightness amplitude over one rotation of the body, and its cumulative distribution function.

Let us first assume isotropic spins and two model parameters: $p:=b/a$
describing the shape elongation (the smaller the $p$, the more elongated the body), and $\theta$ for aspect angle: 
$\cos\theta={\mathbf v}\cdot{\mathbf e}$, where $\mathbf v$ is the spin direction (given by the polar coordinates $(\beta,\lambda)$ in the inertial frame)
and $\mathbf e$ the line of sight (unit vectors). Due to model symmetry, we only
need to consider the interval $0\le\theta\le\pi/2$. With $\phi$ for the longitudinal angle in a coordinate frame fixed to the ellipsoid, the area $I$ of the ellipsoid's projection in the direction $\mathbf e$ is \cite{con}
$$
I=\pi abc\sqrt{\frac{\sin^2\theta\cos^2\phi}{a^2}+\frac{\sin^2\theta\sin^2\phi}{b^2}+\frac{\cos^2\theta}{c^2}},
$$
so in terms of our model definitions, the brightness $L$ scaled against the maximal possible value $\pi a$ is
\begin{equation}
L=\sqrt{p^2\sin^2\theta\cos^2\phi+\sin^2\theta\sin^2\phi+\cos^2\theta}=\sqrt{1+(p^2-1)\sin^2\theta\cos^2\phi}.\label{bright}
\end{equation}

The statistical observable can be anything that describes the variation of the brightness as the target rotates (at a fixed $\theta$). At first we take this to be the peak-to-peak amplitude; here we consider the ratio $A=L_{\rm min}/L_{\rm max}=L\vert_{\phi=0}/L\vert_{\phi=\pi/2}$ (i.e., an "inverse amplitude": the smaller the $A$,
the larger the variation). Thus we have chosen the convenient $0< p\le 1$ and $0<A\le 1$ (rather than either of these extending to infinity). The
assumption is that all objects rotate about an axis (the ellipsoid's $c$-axis), which produces the observed projections random in $\phi$. At first, we consider the randomness of $\theta$ to be due to a uniform distribution of rotation axis directions on $S^2$; later, we take the randomness to be caused by a shifting viewing position.

The
amplitude $A$ is given by
\begin{equation}
A=\sqrt{\cos^2\theta+p^2\sin^2\theta}=\sqrt{1+(p^2-1)\sin^2\theta},\label{amp}
\end{equation}
so the iso-$A$ curves in the $(p,\theta)$-plane are given by
$$
\cos^2\theta_A(p)=\frac{A^2-p^2}{1-p^2}:=g_A(p).
$$
The solutions for $\theta_A$ are convex "ripples" starting from the point $(p=0,\theta=\pi/2)$
(upper left corner) for $A=0$ and continuing to the 
lines $\theta=0$ and $p=1$ for $A=1$ (lower right corner).
Denoting the model DF of elongation by $f(p)$, we write the unnormalized CDF $C(A)$ as 
$$
C(A)=\int_0^{p_{\rm max}(A)} f(p)\int_{\theta_A(p)}^{\pi/2} \sin\theta\,
d\theta \,dp,
$$
where the minimal shape elongation needed to produce
amplitude $A$, obtained at $\theta=\pi/2$, is
$p_{\rm max}(A)=A$.
Thus
\begin{equation}
C(A)=\int_0^A f(p)\int_0^{\sqrt{g_A(p)}} \,dx \,dp
=\int_0^A f(p) \sqrt{g_A(p)} \,dp.\label{CAp}
\end{equation}

We can also include the effect of spin distribution. Assuming $\lambda$ to
be isotropic and the observation directions to be in the $xy$-plane of the inertial frame (as they approximately are for the majority of asteroids, when this plane is that of the Earth's orbit), we study the DF $f_\beta(\beta)$ 
(or its joint DF $f(p,\beta)$ with $p$), or the more useful
$f_\beta(\cos\beta)$. The minimal aspect angle is
 $\theta_{\rm min}=\pi/2-\beta$. Now, substituting $ {\bf e}=(\cos\lambda_e,\sin\lambda_e,0)$ into
$\cos\theta=e_1\sin\beta\cos\lambda+e_2\sin\beta\sin\lambda+e_3\cos\beta$,
we have
$$
\cos\theta=\sin\beta\cos\Lambda,
$$
where $\Lambda:=\lambda-\lambda_e$ is assumed isotropic (evenly distributed 
longitudes of 
spins and observing directions). It is sufficient to explore the region
$\Lambda\in[0,\pi/2]$ as other quadrants are just symmetric multiples.

The iso-$\theta$ curves 
$$
\Lambda_\theta (\beta)=\arccos\frac{\cos\theta}{\sin\beta}
$$
in the $(\beta,\Lambda)$-plane are now expanding
``ripples'' of increasing $\theta$
starting from the point $(\beta=\pi/2,\Lambda=0)$ for $\theta=0$.
The CDF for $\theta$ is
\[
\begin{split}
C_\theta(\theta)&=\int_{\pi/2-\theta}^{\pi/2} 
f_\beta(\cos\beta) \sin\beta\int_0^{\Lambda_\theta (\beta)} 
\,d\Lambda' \,d\beta
=\int_{\pi/2-\theta}^{\pi/2} f_\beta(\cos\beta) \sin\beta
\Lambda_\theta (\beta)\,d\beta\\
&=\int_{0}^{\sin\theta}
f_\beta(x)\arccos\frac{\cos\theta}{\sqrt{1-x^2}}\,dx.
\end{split}
\]
(Differentiating
$d C_\theta(\theta)/d\theta$ yields $\sin\theta$ when $f_\beta=1$ as expected for isotropic spins.)

Using the complement of $C_\theta$ (i.e., $\hat C_\theta$ in the decreasing direction from $\theta=\pi/2$ to $\theta=0$) to
write the number of  states between $\theta_A(p)$ and $\theta=\pi/2$, our CDF $C(A)$ is
\begin{equation}
C(A)=\int_0^A \Big\lbrack\frac{\pi}{2}\int_0^1 f(p,x)\,dx -\int_{0}^{\sqrt{1-g_A(p)}}
f(p,x)\arccos\frac{\sqrt{g_A(p)}}{\sqrt{1-x^2}}\,dx\Big\rbrack \,dp.\label{CApx}
\end{equation}

\subsection{Brightness deviation $\eta$ for amplitude estimation} \label{sec:eta}

If the amplitude cannot be measured directly, a possible observable is the brightness variation around some
mean value, requiring fewer points. Using intensity squared, 
we obtain from Eq.\ \eqref{bright} a simple average quantity over model rotation
at constant $\theta$:
$$
\langle L^2\rangle
=\frac{1}{2\pi}\int_0^{2\pi} \Big( 1+\sin^2\theta(p^2-1)\cos^2\phi \Big) \,d\phi=
1+\frac{1}{2}\sin^2\theta(p^2-1).
$$
The standard deviation over rotation is
\[
\begin{split}
\Delta L^2&=\sqrt{\langle(L^2-\langle L^2\rangle)^2\rangle}=
\sqrt{\langle[\sin^2\theta(p^2-1)(\cos^2\phi-1/2
%\frac{1}{2}
)]^2\rangle}\\
&=\sin^2\theta(1-p^2) \left[ \frac{1}{2\pi}\int_0^{2\pi}(\cos^4\phi-\cos^2\phi)\,d\phi
+\frac{1}{4} \right]^{1/2}=\sin^2\theta(1-p^2)/\sqrt{8},
\end{split}
\]
and normalizing this with $\langle L^2\rangle$ yields
$$
\eta(\theta,p):=\Delta L^2/\langle L^2\rangle
=\sqrt{ \Big\langle \Big( \frac{L^2}{\langle L^2\rangle}-1 \Big)^2 \Big\rangle }
=\frac{1}{2\sqrt{2}}\Big[\frac{1}{\sin^2\theta(1-p^2)}-\frac{1}{2}\Big]^{-1}.
$$
Note that $0\le\eta\le1/\sqrt{2}$.
Thus, by Eq.\ (\ref{amp}), our brightness deviation $\eta$ is directly related to the amplitude $A$:
\begin{equation}
\eta=\frac{1}{\sqrt{8}}\Big(\frac{1}{1-A^2}-\frac{1}{2}\Big)^{-1},
\quad A=\sqrt{1-\Big(\frac{1}{\sqrt{8}\eta}+\frac{1}{2}\Big)^{-1}}.\label{eta}
\end{equation}
This is a particular advantage of the biaxial model: we can use all observations of $A$ and $\eta$ together to form a $C(A)$ instead of having to compute a $C_\eta(\eta)$ by forming a $g_\eta(p)$ and continuing as for $g_A(p)$ above (resulting in similar types of integrals). The latter would be the case for the triaxial ellipsoid, since $\eta$ would depend on $\theta$ (and $c$) in addition to $A$.

\subsection{Brightness two-point scatter $q$} \label{sec:2point}

If there are not enough points covering the rotational phase
to estimate $A$ even by $\eta$, we can use simple
two-point brightness differences. For a group of $N$ points for one target, 
we have $N(N-1)/2$ such values, ordered such that the difference $0<q\le 1$ is
$q=L_{\rm dimmer}/L_{\rm brighter}$. We do not need to have more than one such pair for
one target, so one object does not have to cover
the rotational phases well. This is the observable used in \cite{Szabo}.

As before, we consider the case when $\theta$ is (approximately) the same for the pair. Now we have,
for two rotation phases $\phi_0$ and $\phi$,
$$
\frac{1+(p^2-1)\sin^2\theta\cos^2\phi}
{1+(p^2-1)\sin^2\theta\cos^2\phi_0}=q^2,
$$
so, with $0< q\le 1$, i.e., $\phi\le\phi_0$ (due to symmetry, we only
need to consider the interval $0\le\phi\le\pi/2$), we define iso-$q$ contours
in the $(\phi,\theta)$ plane (for given $p,\phi_0$) by
$$
r(q,p,\phi_0,\phi):=\frac{q^2-1}{(p^2-1)(\cos^2\phi-
q^2\cos^2\phi_0)},
$$
so, to have viable solutions for $\theta_q$ from $\sin^2\theta_q=r$, we must have $p\le q$, $\phi\le\phi_0$, and 
$$
\cos^2\phi\ge \frac{q^2-1}{p^2-1}+q^2\cos^2\phi_0:=s(q,p,\phi_0)\ge\cos^2\phi_0,
$$
so $\phi$ exist for given $p,q,\phi_0$ only if $s\le 1$; i.e.,
$$
\cos^2\phi_0\le \frac{p^2-q^2}{q^2(p^2-1)}:=t(q,p).
$$
Denoting
$$
\tilde s(q,p,\phi_0):=\arccos\sqrt{s(q,p,\phi_0)},\quad \tilde t(q,p):=\arccos\sqrt{t(q,p)},
$$
 our CDF is thus
\begin{equation}
\begin{split}
C_q(q)&=\int_0^q f(p)\int_{\tilde t(q,p)}^{\pi/2} \int_0^{\tilde s(q,p,\phi_0)}
\int_{\theta(q,p,\phi_0,\phi)}^{\pi/2}\sin\theta' 
\,d\theta' \,d\phi \,d\phi_0 \,dp\\
&=\int_0^q f(p)\int_{\tilde t(q,p)}^{\pi/2} \int_0^{\tilde s(q,p,\phi_0)}
\sqrt{1-r(q,p,\phi_0,\phi)} \,d\phi \,d\phi_0 \,dp.
\end{split}\label{Cqp}
\end{equation}
Again, we can include the $\beta$-distribution by expanding the integral in the same way as with Eq.\ \eqref{CApx}.

 \section{Inverse problem}

In this section, we consider the properties of the inverse problem version of the forward model above. First we show that the problem of recovering the $p$-distribution from $\eta$-data has a unique solution (for the simplified model), and the ill-posedness is not severe. Then we discuss practical methods of solution before moving to realistic shapes and numerical simulations in the following sections. We also discuss the reason why $\eta$-data are sufficient for recovering $f(p,\beta)$, while the two-point $q$-data only suffice for $f(p)$.

\subsection{Uniqueness and stability}

\begin{theorem}
The distribution function $f(p)$ is uniquely derivable from $C(A)$ (i.e., $\eta$-scatter data), and the problem is (moderately) ill-posed.
\end{theorem}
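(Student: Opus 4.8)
The plan is to read Eq.~\eqref{CAp} as a first-kind integral equation for $f$ and to reduce it to a classical Abel equation, whose inversion is explicit. Writing out the kernel,
\[
C(A)=\int_0^A f(p)\,\sqrt{\frac{A^2-p^2}{1-p^2}}\,dp,
\]
I would first strip off the algebraic weights by the substitution $u=p^2$, $v=A^2$. Setting $F(u):=f(\sqrt{u})/(2\sqrt{u}\sqrt{1-u})$ and $\tilde C(v):=C(\sqrt{v})$, the equation collapses to the pure power-kernel form
\[
\tilde C(v)=\int_0^v F(u)\,\sqrt{v-u}\,du,
\]
which is a Riemann--Liouville fractional integral of order $3/2$ applied to $F$ (up to the constant $\Gamma(3/2)$).

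The key step is then to differentiate once in $v$. Because the kernel $\sqrt{v-u}$ vanishes at the upper limit $u=v$, the boundary term drops out and I obtain
\[
\tilde C'(v)=\tfrac12\int_0^v \frac{F(u)}{\sqrt{v-u}}\,du,
\]
which is exactly the classical Abel integral equation. Its unique solution is given by the standard inversion formula
\[
F(u)=\frac{2}{\pi}\,\frac{d}{du}\int_0^u \frac{\tilde C'(v)}{\sqrt{u-v}}\,dv,
\]
and undoing the substitution recovers $f(p)=2p\sqrt{1-p^2}\,F(p^2)$ uniquely from $C(A)$. Since the Abel operator is injective and the formula above is constructive, the uniqueness claim follows without any appeal to abstract arguments.

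For the ill-posedness, I would read off the smoothing order from the reduction itself: recovering $F$ from $\tilde C$ costs one ordinary differentiation together with a half-order Abel inversion, for a total derivative of order $3/2$, while the final passage from $F$ to $f$ is merely multiplication by algebraic weights that are bounded away from the endpoints and adds no smoothing. The forward operator is therefore equivalent to fractional integration of order $3/2$ on a bounded interval, whose singular values decay algebraically like $n^{-3/2}$ rather than exponentially. This is the precise sense in which the problem is only moderately (algebraically) ill-posed.

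The main obstacle I anticipate is not the formal inversion but the endpoint analysis required to make it rigorous: the weight $F(u)=f(\sqrt{u})/(2\sqrt{u}\sqrt{1-u})$ is singular at $u=0$ and $u=1$, so I must control its integrability there before differentiating under the integral sign and applying the Abel inversion, and I must verify that the reconstructed $F$ corresponds to an admissible nonnegative distribution $f$. A secondary point is justifying the interchange of differentiation and integration near $v=0$, where $\tilde C'$ may be unbounded; choosing a function space weighted by $\sqrt{u}\,\sqrt{1-u}$ should handle both issues simultaneously.
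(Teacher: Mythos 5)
Your proposal is correct, and it takes a genuinely different route from the paper. The paper expands $f(p)$ as a power series, evaluates each term of the CDF integral in closed form via Appell and Gauss hypergeometric functions, and obtains a triangular recursion $c_n=\frac{1}{k_n}\bigl(a_n-\sum_{i} c_{n-2i}k_{n-2i}b^{n-2i}_i\bigr)$ expressing the coefficients of $f$ through those of $C$; uniqueness is the solvability of that recursion, and the moderate ill-posedness is read off from the decay of the diagonal attenuation factors $k_n$. You instead substitute $u=p^2$, $v=A^2$ to turn the kernel $\sqrt{(A^2-p^2)/(1-p^2)}$ into the pure power kernel $\sqrt{v-u}$ acting on the weighted unknown $F(u)=f(\sqrt u)/(2\sqrt u\sqrt{1-u})$, recognize a Riemann--Liouville integral of order $3/2$, and invert it by one differentiation followed by the classical Abel formula. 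Your route buys an explicit closed-form reconstruction of $f$ and a sharper quantification of the instability: the operator smooths by order $3/2$, with singular values decaying like $n^{-3/2}$ --- which in fact matches the true asymptotics of the paper's $k_n$ (by Stirling, $k_n\sim \mathrm{const}\cdot n^{-3/2}$, slightly faster than the ``proportional to the degree'' rate quoted in the theorem, whose appendix approximation is only a fit for $n<100$). The paper's route buys a coefficient-level correspondence that feeds directly into the binned basis-function machinery used in the numerics and extends, in the remarks, to the triaxial case. The endpoint issues you flag are real but benign: for bounded $f\ge 0$ the weight $F$ is integrable at both $u=0$ and $u=1$, which is all that differentiation under the integral sign and the Abel inversion require.
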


\begin{proof}
We can always assume that the observed $C(A)$ can be expressed as a polynomial ($C$ is bounded and $0\le A\le 1$) to an arbitrary precision and degree by, e.g., linear sets of equations or orthogonal functions (since this construction is not used in practice, it is immaterial how the expansion is obtained). Then, by the lemma of the Appendix, the polynomial coefficients of $f(p)$ are uniquely determined, and any errors in the coefficients of $C(A)$ blow up at infinite degree at a rate proportional to the degree. The results holds for $\eta$-data as well, since $A$ is uniquely derivable from $\eta$.
\end{proof}

\subsection{Solution methods}

In an approximation consistent with the coarseness of the model, it is practical to assume the population to consist of a moderate number $n$ of bins in each of which all members have the same $p$ (and $\beta$). Then, if we have only
$p$-bins and isotropic $\theta$,
$$
C(A)=\sum_{i =1}^n w_i\,F_i(A),
$$
where the basis functions $F_i(A)$ are, from Eq. (\ref{CAp}),
\begin{equation}
F_i(A)=\left\{\begin{array}{rl}
0, & A\le p_i\\
\sqrt{\frac{A^2-p_i^2}{1-p_i^2}} ,& A>p_i.
\end{array}\right.
\end{equation}
The range of the monotonously increasing $F_i$ is $[0,1]$, and $F_i=1$ at $A=1$ (Fig.\ \ref{Fi}). The occupation numbers of the bins are given
by $w_i$.  In matrix form,
$$
Mw = C , \quad \text{where} \quad C\in\R^k, w\in\R^n,\, M_{ji}=F_i(A_j),
$$
where the $k$ observed values of $A$ are sorted in ascending order, and the vector $C$ contains the observed CDF: each element $C_j = j/k$ is the value of $C$ at $A_j$.

\begin{figure}
\centering \includegraphics[width=\textwidth]{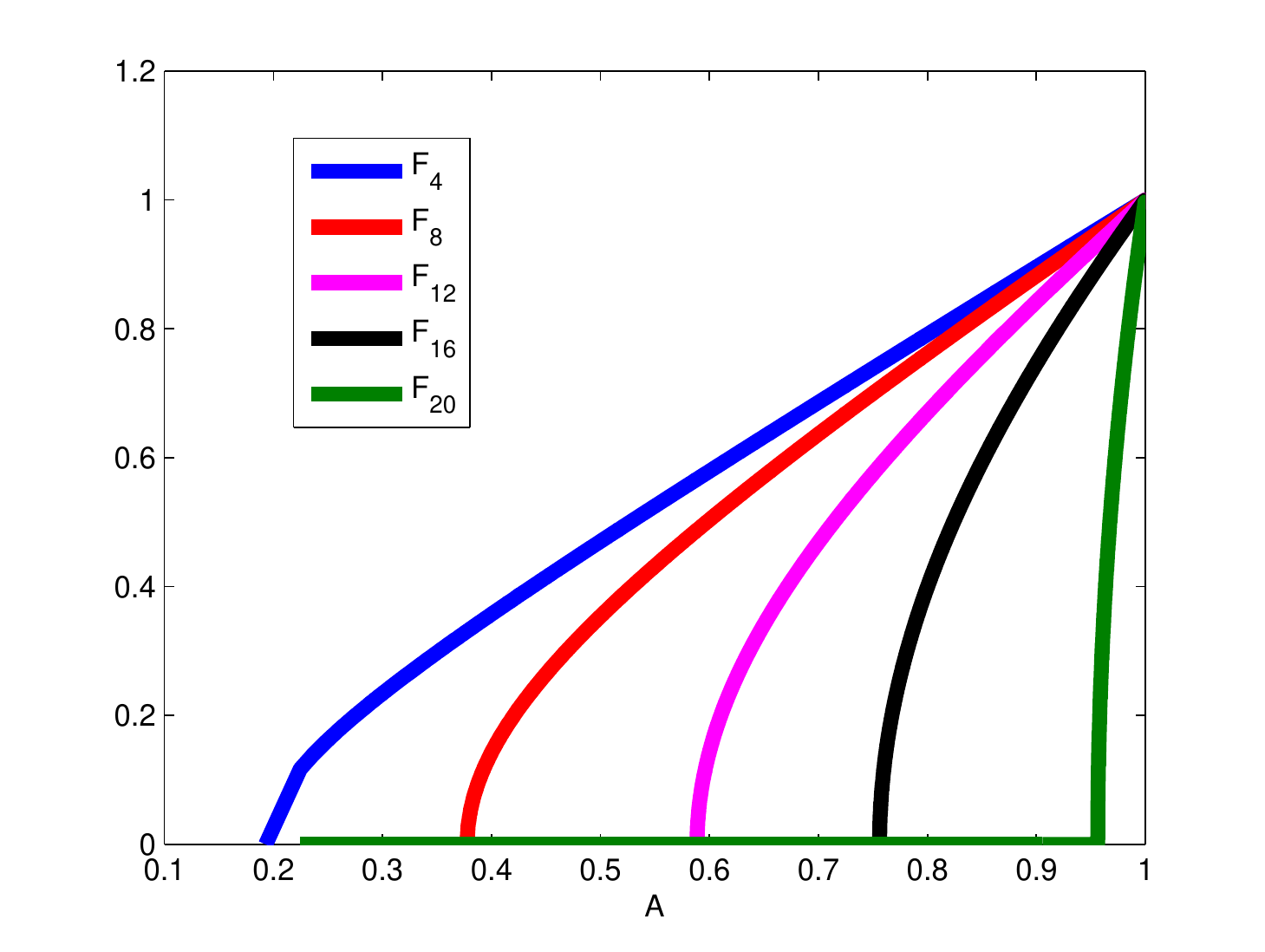}
\caption{Sample basis functions $F_i$ on a set of bins $p_i$, where $i = 1$, $\ldots$, $20$.}
\label{Fi}
\end{figure}

If we include the $\beta$-distribution,
$$
C(A)=\sum_{ij} w_{ij}\,F_{ij}(A),
$$
where, from Eq.\ (\ref{CApx}), the monotonously increasing basis functions $F_{ij}(A)$, with the range $[0,\pi/2]$, are, on a $(p_i,\beta_j)$-grid,
\begin{equation}
F_{ij}(A)=\left\{\begin{array}{rl}
0, & A\le p_i\\
\frac{\pi}{2}-\arccos\frac{\sqrt{A^2-p_i^2}}{\sin\beta_j\sqrt{1-p_i^2}} ,& p_i<A<\sqrt{\sin^2\beta_j+p_i^2\cos^2\beta_j}\\
\frac{\pi}{2}, & A\ge\sqrt{\sin^2\beta_j+p_i^2\cos^2\beta_j}.
\end{array}\right.\label{Fijeq}
\end{equation}
The $F_{ij}(A)$ are sigmoidal functions (Fig.\ \ref{Fij}), approaching the step function when $p_i\rightarrow 1$ (step at $A=1$) or $\beta_j\rightarrow 0$ (step at $A=p_i$). Because of our choice of scale of $p$ and $A$, parts of the $F_{ij}$ tend to pack together at the low end of $A$, making them less well distinguishable than those with the slope in the higher end of $A$, but on the other hand, $p$-values less than 0.4 are not likely for real celestial bodies.
The occupation numbers $w_{ij}$ are assigned to each bin, and
occupation levels proportional to $\sin\beta$ mean a uniform density on the direction sphere. 

\begin{figure}
\centering \includegraphics[width=\textwidth]{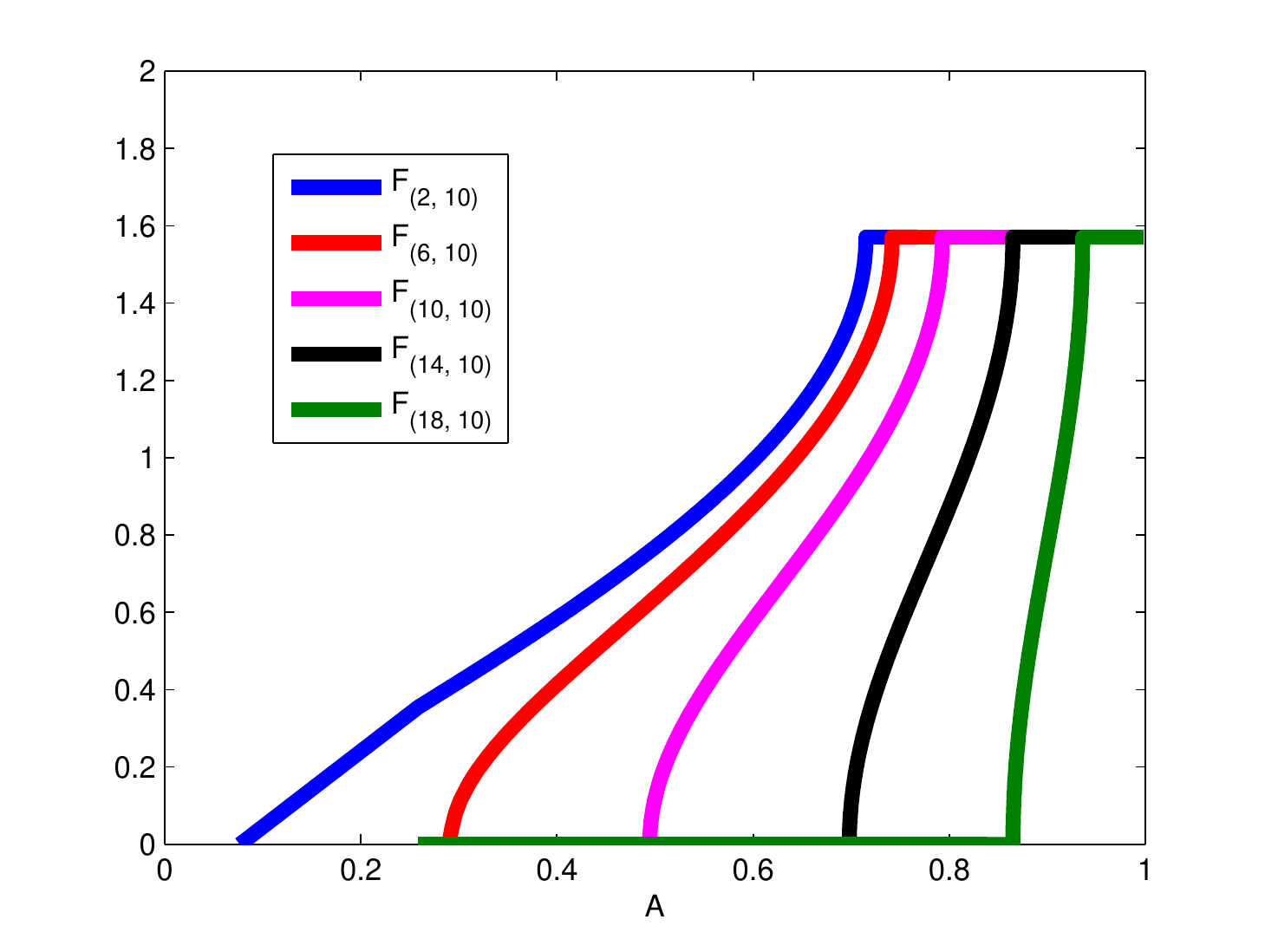}
\caption{Sample basis functions $F_{ij}$ on a set of bins $(p_i , \beta_j)$, where $i = 1$, $\ldots$, $20$ and $j = 1$, $\ldots$, $19$.}
\label{Fij}
\end{figure}

\begin{theorem}
The bin model $w_{ij}$ are uniquely determined by the $C(A)$.
\end{theorem}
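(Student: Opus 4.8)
The plan is to read the statement as an injectivity claim: two weight vectors giving the same $C(A)$ differ by a vector in the kernel of $w\mapsto\sum_{ij}w_{ij}F_{ij}$, so uniqueness of $(w_{ij})$ is equivalent to linear independence of the basis functions $\{F_{ij}\}$ of Eq.~\eqref{Fijeq} on $[0,1]$, i.e.\ to the implication that $\sum_{ij}c_{ij}F_{ij}\equiv 0$ forces all $c_{ij}=0$. I would first simplify the middle branch of \eqref{Fijeq} using $\pi/2-\arccos x=\arcsin x$ together with the substitution $u=u(A)=\sqrt{(A^2-p_i^2)/(1-p_i^2)}$, so that for fixed $i$ the function $F_{ij}$ equals $\arcsin(u/\sin\beta_j)$ on its rising branch ($u<\sin\beta_j$) and saturates at $\pi/2$ ($u\ge\sin\beta_j$). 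Here $u$ is a smooth, strictly increasing function of $A$ on $(p_i,1)$ with $u(p_i)=0$, so any identity in $A$ on a small right-neighbourhood of $p_i$ transfers to an identity in $u$ on $(0,\varepsilon)$.

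Second, I would exploit the distinct turn-on points. Order the bins so that $p_1<p_2<\cdots<p_n$, and recall $F_{ij}(A)=0$ for $A\le p_i$. Then on the short interval $(p_i,p_i+\delta)$, with $\delta$ small enough that $p_i+\delta<p_{i+1}$, every $F_{i'j}$ with $i'>i$ still vanishes identically. I would run an induction on $i$: assuming $c_{i'j}=0$ for all $i'<i$, the identity $\sum_{i',j}c_{i'j}F_{i'j}\equiv 0$ collapses on $(p_i,p_i+\delta)$ to $\sum_j c_{ij}F_{ij}\equiv 0$, i.e.\ $\sum_j c_{ij}\arcsin(u/\sin\beta_j)=0$ for all small $u>0$.

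Third comes the only genuinely delicate point: separating the fixed-$i$ family, whose members all switch on together at $A=p_i$ and therefore cannot be told apart by their supports. Here I would Taylor-expand $\arcsin(u/\sin\beta_j)=\sum_{k\ge 0}a_k\,u^{2k+1}/\sin^{2k+1}\beta_j$ (with all $a_k>0$, valid since $u/\sin\beta_j<1$ on the small interval) and match coefficients of each power $u^{2k+1}$, which gives $\sum_j c_{ij}\,(1/\sin\beta_j)^{2k+1}=0$ for every $k\ge 0$. Taking $k=0,\dots,m-1$, with $m$ the number of $\beta$-bins, yields a square linear system whose matrix, after factoring $1/\sin\beta_j$ out of each column, is a Vandermonde matrix in the values $1/\sin^2\beta_j$; since the $\beta_j\in(0,\pi/2)$ are distinct, so are the $\sin\beta_j$, the determinant is nonzero, and hence $c_{ij}=0$ for all $j$. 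This closes the induction and establishes linear independence, so $(w_{ij})$ is unique.

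I expect the main obstacle to be exactly this last separation of the same-$i$ functions: a pure support/peeling argument determines the partition into turn-on groups but stalls within each group. The resolution is to extract finer local information — either the Taylor--Vandermonde structure above, or the equivalent observation that each $F_{ij}$ has an inverse-square-root derivative blow-up at its \emph{distinct} saturation point $A_{ij}^*=\sqrt{\sin^2\beta_j+p_i^2\cos^2\beta_j}$, a singularity no other term in the sum can cancel. The remaining care is routine: checking that the termwise Taylor expansion is justified on the small interval and that the smooth monotone change of variable $u(A)$ legitimately transfers the vanishing identity from $A$ to $u$.
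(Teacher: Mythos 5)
Your proposal is correct, and on the decisive step it takes a genuinely different route from the paper. Both arguments reduce the theorem to linear independence of the $F_{ij}$ and both exploit the distinct turn-on points $A_-=p_i$ to separate the groups with different $i$ (your induction on $i$ over the intervals $(p_i,p_i+\delta)$ is a precise version of the paper's remark that a combination ``starts to deviate from zero at the lowest $A_-$ of the set''). Where you diverge is in separating the same-$i$ family: the paper appeals to the \emph{upper} endpoints, arguing that the pair $(A_-,A_+)$ with $A_+=\sqrt{\sin^2\beta_j+p_i^2\cos^2\beta_j}$ is unique to each $F_{ij}$ and so ``both end points cannot be matched by a superposition of other $F_{kl}$''; as written this is a plausibility argument rather than a proof, since it does not address possible cancellations among several signed terms sharing the same $A_-$. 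You instead extract all Taylor coefficients of $\sum_j c_{ij}\arcsin(u/\sin\beta_j)$ at the common turn-on point and invert a Vandermonde system in the distinct values $1/\sin^2\beta_j$, which closes exactly the gap the paper leaves open (your observation that $\sum_j c_{ij}/\sin\beta_j=0$ can hold nontrivially shows why the first-order ``deviation from zero'' heuristic is insufficient). Your alternative remark --- that the inverse-square-root derivative singularity of $F_{ij}$ at its distinct saturation point $A_+$ cannot be cancelled by the other terms --- is essentially the rigorous form of the paper's own endpoint argument, so you have in effect supplied complete versions of both possible proofs. The only hypotheses worth stating explicitly are that the $p_i$ are distinct, the $\beta_j$ are distinct and lie in $(0,\pi/2]$, and that $C(A)$ is known as a function on a right-neighbourhood of each $p_i$ (not merely at finitely many sample points); under these, your argument is complete.
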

\begin{proof}
The pairs of end points; i.e., the values of $A$ between which $F_{ij}$ changes ($A_-$ at $F_{ij}=0$ and $A_+$ at $F_{ij}=\pi/2$), are unique for each $F_{ij}$. Any combination of $F_{ij}$ will start to deviate from zero at the lowest $A_-$ of the set, and stop changing at the highest $A_+$ of the set. Thus both end points of an $F_{ij}$ cannot be matched by a superposition of other $F_{kl}$, so the $F_{ij}$ are linearly independent. Since the model $C(A)$ is a linear combination of the $F_{ij}$, the $w_{ij}$ are unique for the observed $C(A)$.
\end{proof}

The basis function $G_i(q)$ for a given $p_i$ in the two-point brightness scatter case is, from Eq.\ (\ref{Cqp}),
\begin{equation}
G_i(q)=\left\{\begin{array}{rl}
0, & q\le p_i\\
\int_{\tilde t(q,p_i)}^{\pi/2} \int_0^{\tilde s(q,p_i,\phi_0)}
\sqrt{1-r(q,p_i,\phi_0,\phi)} \,d\phi \,d\phi_0 ,& q>p_i.
\end{array}\right.
\end{equation}
Although the $\phi$-integral can be given in terms of elliptic functions, this is best computed by evaluating 
the double integral numerically. The maximum value of $G_i(q)$ is obtained at $q=1$:
$$
G_i(1)=\int_0^{\pi/2}\int_0^{\phi_0}\,d\phi\,d\phi_0=\frac{\pi^2}{8}.
$$
Our basis functions $G_i$ are closed-form expressions of those computed by Monte-Carlo sampling in \cite{Szabo}.

In principle, we can expand $G_i$ to $G_{ij}(q)$ for a $(p_i,\beta_j)$-grid in the same way that $F_i$ were expanded to $F_{ij}$.
However, a notable difference between the two-index basis functions of $A$- or $q$- data is that the $G_{ij}(q)$ all reach their maxima at the same point $q=1$ since the two-point comparison can always contain two equal brightnesses for any $p$ and $\beta$. Thus the $\beta_j$-curves of the $G_{ij}(q)$ of a given $p_i$ form a curve family with the same abscissae for the minimum ($q=p_i$) and maximum ($q=1$); i.e., members of the family can easily be mimicked by a superposition of other members unlike in the case of $F_{ij}(A)$. Thus $G_{ij}$ are not usable for solving the inverse problem in practice; i.e., $\beta$-information is not recoverable from $q$-data as we will note below.

If we want to use regularization to smooth the solutions for either $p$ or $\beta$, we may apply, e.g., the following $(n-1)\times n$ regularization matrix in the $p$-only case:
$$
(R_p)_{ij} = \left\{ \begin{array}{rl}
-1/(p_{i+1}-p_i), & i=j \\
1/(p_{i+1} -p_i), & j=i+1 \\
0, & {\rm elsewhere}
\end{array} \right.
$$
and its generalization for the $(p,\beta)$-grid, as well as similarly $R_\beta$ with $\beta$. These approximate the gradients at each $w_{ij}$ in the $p$- and $\beta$-directions only; one can construct more general matrices, but we found these to suffice for our problem. The occupation numbers can be obtained as a solution to an optimization problem:
\begin{equation}
\hat w = \arg\min_w \left( \Vert{C-Mw}\Vert^2 + \delta_p \Vert{R_p w}\Vert^2 + \delta_{\beta} \Vert{R_{\beta} w}\Vert^2 \right),\quad w\in\R^n_+ .\label{opt}
\end{equation}
To obtain the solution $\hat{w}$, we create an extended matrix $\tilde{M}$:
$$
\tilde M=\left(\begin{array}{r}
M\\
\sqrt\delta_p R_p\\
\sqrt\delta_{\beta} R_{\beta} \end{array}\right), \quad \tilde C=\left(\begin{array}{l}C\\0_{(l-1)m}\\
0_{l(m-1)}\end{array}\right),
$$
assuming a $(p, \beta)$-grid of the size $n=lm$ with, respectively, $l$ and $m$ equally spaced $p$- and $\beta$-values, and we find the least-squares solution of $\tilde{M} w = \tilde{C}$ with the constraint that each element of $w$ be larger than or equal to zero. Due to the instability of the problem, the direct unconstrained matrix solution would lead to negative values, but in, e.g., the Matlab environment, the positivity constraint is simple to enforce with a function that uses quadratic programming. We found that this is more practical than nonlinear optimization with, e.g., $w_i = \exp(z_i)$.

\section{General shapes and model error} \label{sec:asteroid}

The ellipsoid is a very crude approximation of an asteroid, so we test our approach with simulated data created with more general shapes. These can produce brightness variation over $\phi$ that is much more complicated than that of the ellipsoid, even when the shapes are convex \cite{nea}.
For such shapes, the concept of elongation is no longer as well defined as for ellipsoids, but we estimate $p = b/a$ simply by choosing $a$ to be the longest diameter in the $xy$-plane, and $b$ the width in the corresponding orthogonal direction.

For convex shapes, the projected area (i.e., brightness with geometric scattering) can be computed as a sum over visible facets of a polyhedral representation:
$$
L(\phi, \theta) = \sum_{i : \, \mu_i > 0} \mathcal{A}_i \mu_i ,
\label{eq:Lfun-ast}
$$
where $\mu_i = n_i \cdot \omega$, $\mathcal{A}_i$ is the area of the facet, $n_i$ is the outward unit normal vector, and $\omega$ is the direction of the illumination source. This makes simulations very fast (for non-convex shapes, ray tracing is required to determine which facets are visible, but their brightness variation can be well described by convex shapes when viewing and illumination directions are close to each other). 
We utilize asteroid shape models available at DAMIT\footnote{\tt http://astro.troja.mff.cuni.cz/projects/asteroids3D} to create our data. We also generate additional, artificial shapes by applying basic transformations (such as stretching and shrinking) on the original shape models to populate $p$-bins at will. The DAMIT shapes reproduce the typical brightness variations seen in actual asteroids, so with them we can extensively simulate real data. When CDFs of actual observations are created, one should choose data from as closely coinciding viewing and illumination directions as possible to reduce further modelling errors.

In Eq.\ \eqref{eta}, the condition $0 \le \eta \le 1/\sqrt{2}$ may be violated at some measurements of $L(\phi, \theta, p)$ when the parameter $p$ is low ($\lesssim 0.4$). If $\eta > 1/\sqrt{2}$, it follows that the amplitude $A$ becomes purely imaginary according to Eq.\ \eqref{eta}. This error is caused by the model, as the ellipsoidal approximation of a convex asteroid is inaccurate. We have omitted these complex amplitudes in our study. In practice, the majority of realistic values of $p$ are in the range of $[1/2, 1]$, so it is rare that complex amplitudes are encountered.

\begin{figure}[!ht] \centering \includegraphics[width=\textwidth]{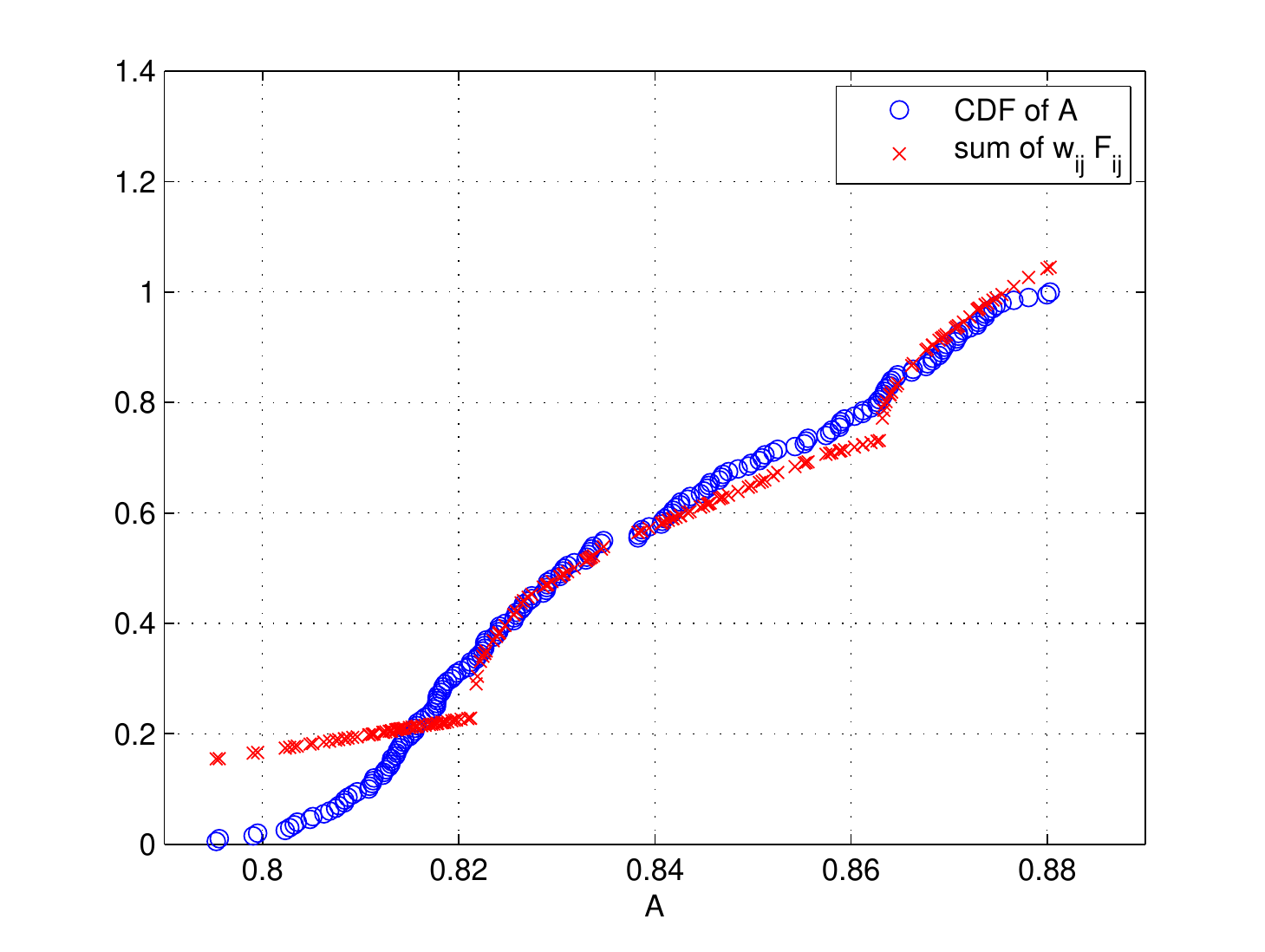} \caption{Recovered function series $\sum_{ij} w_{ij} F_{ij}(A)$ plotted together with the data CDF $C(A)$. Here the occupation numbers $w_{ij}$ were solved as an inverse problem for a single asteroid. This shows that if one target of general shape is used for the CDF simulations, the basis functions $F_{ij}$ are not very useful.} \label{fig:fitting01} \end{figure}
\begin{figure}[!ht] \centering \includegraphics[width=\textwidth]{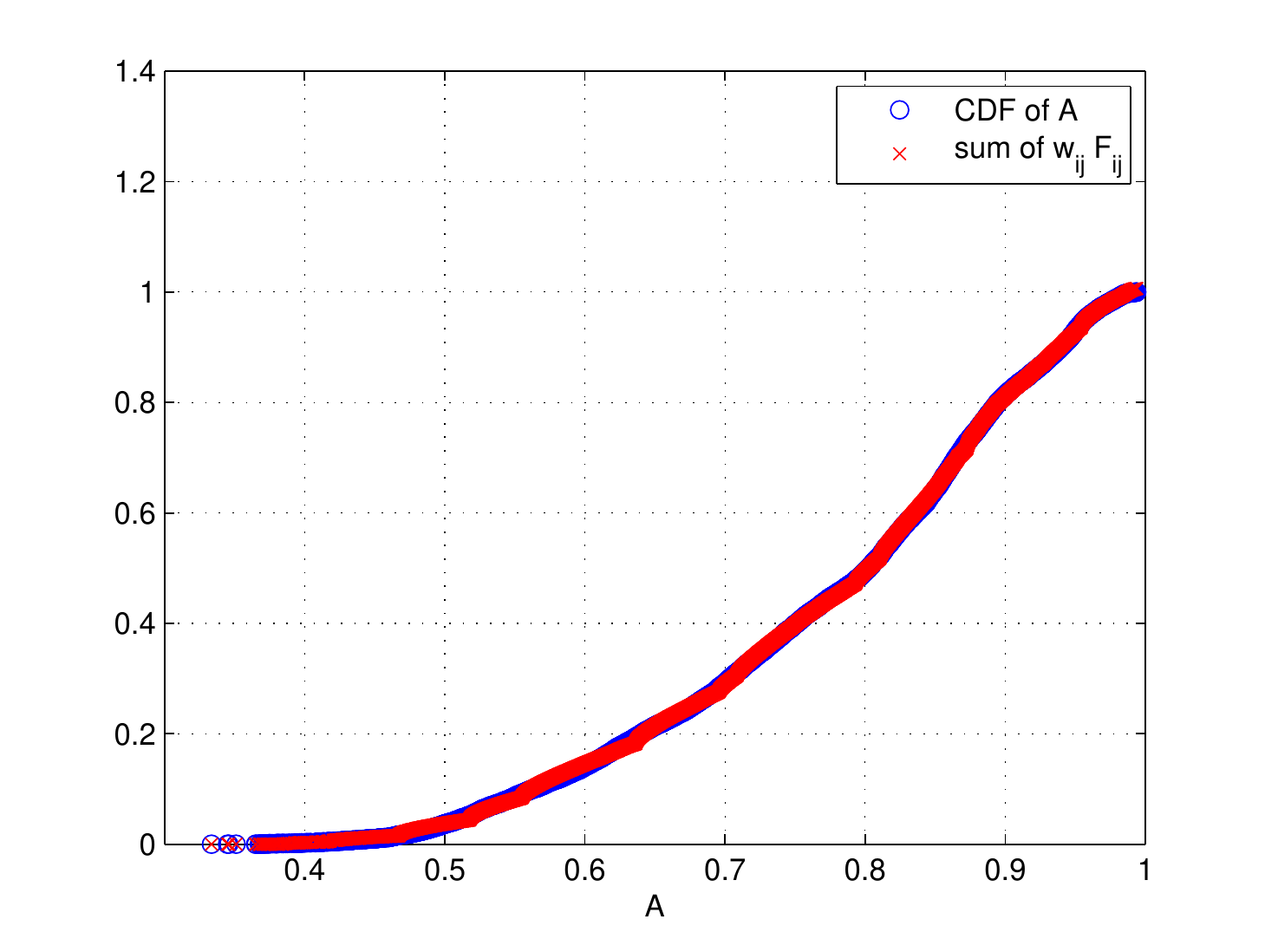} \caption{As Fig.\ \ref{fig:fitting01}, but here the occupation numbers $w_{ij}$ were solved using a population of $100$ asteroids. Compared to Fig,\ \ref{fig:fitting01}, the function series converges to the CDF more accurately due to the large shape population. The relative error is about 0.5\%. This shows that, for large populations, the shape deviations from ellipsoids actually cancel the systematic effects of each other, so the basis functions $F_{ij}$ are very good for describing the CDF.
} \label{fig:fitting02} \end{figure}

\section{Simulations} \label{chap:simu}

We have roughly three major sources of error in our simulations. First, assume Gaussian noise on the brightness function, $\varepsilon_L \sim \mathrm{N}(0, \, \sigma^2)$. We can assume the relative noise level is fairly small, $\vert{ \varepsilon_L }\vert \lesssim 0.03 L$. Second, while the observations for different values of $\theta$ are numerous (thousands of targets are observed, but most of them only at one $\theta$, so individual targets cannot be modelled), the observations of $\phi$ are typically scarce in most sky surveys to which our methods pertain. For our simulations, we have a minimum sample of 10 observations of $\phi$ for each asteroid if $\eta$ is to be estimated.  Third, our model is incorrect, as the actual shape of an asteroid is actually not an ellipsoid. 

We observed that the systematic error of the model dominates over the noise on the brightness function. Assuming a reasonably bounded noise level, the effect of the noise can be considered insignificant. The limited number of $\phi$ causes inaccuracy in the brightness function, which makes the estimation of $\eta$ harder, especially with a small sample of asteroids. However, if we have a small sample population, the model error still dominates over the low number of observations. For a sample of $\gtrsim 20$ asteroids and 10 observations of $\phi$ for each brightness measurement, the error in $\eta$ becomes small, and apparently the incorrect model is practically the only cause of error.

Fig.\ \ref{fig:fitting01} illustrates the effect of the model shape error.
We have plotted the function series $\sum_{i, \, j} w_{ij} F_{ij}(A)$ from Eq.\ \eqref{Fijeq} in the same plot with a simulated CDF $C(A)$. The occupation numbers $w_{ij}$ have been solved from Eq.\ \eqref{opt}; obviously the ellipsoidal basis functions do not converge fast if simulations are done with one non-ellipsoidal shape. However, as can be seen from Fig.\ \ref{fig:fitting02}, when the CDF is constructed from a population of varying non-ellipsoidal shapes, the basis functions $F_{ij}$ converge well.

Let us consider a sample of $S$ asteroids. For each asteroid, we measure their brightness function $L_l$, where $l = 1$, $\ldots$, $S$, and use them to compute their amplitudes $A_l$ at various geometries from the $\eta$-estimates. In our simulations, we have $S = 100$; such a population size is usually large enough to ensure the convergence of the basis functions while maintaining fast computation times.

First, we attempt to obtain information from only the $p$-distribution, while the $\beta$-distribution is assumed uniform. In our inverse problem, we choose a grid of 20 points for $p$, where $p_i \in [ \frac{i-1}{20}, \frac{i}{20} ]$, and $i = 1$, $\ldots$, $20$. For the $\eta$ estimation method using 10 observations of $\phi$ for each asteroid, we choose $\delta_p \propto 10^{-4}$. This amount of regularization is usually sufficient, as any more would smooth the solution too much and make the peaks of the $p$-distribution too wide. For the two-point scattering case, we choose $\delta_p \propto 10^{-3}$. The $p$-distribution of the forward model and the occupation numbers $w_i$ solved from the inverse problem using both $\eta$ and $q$ as data have been plotted in Fig.\ \ref{fig:p-only}. As we can see, it is possible to obtain accurate information of the $p$ distribution even with a low number of $\phi$ observations. For both methods of sections \ref{sec:eta} and \ref{sec:2point}, the location of the peak was recovered accurately. Of the two, the case of $\eta$-data is more accurate, as expected.

\begin{figure}[!ht] \centering \includegraphics[width=0.32\textwidth]{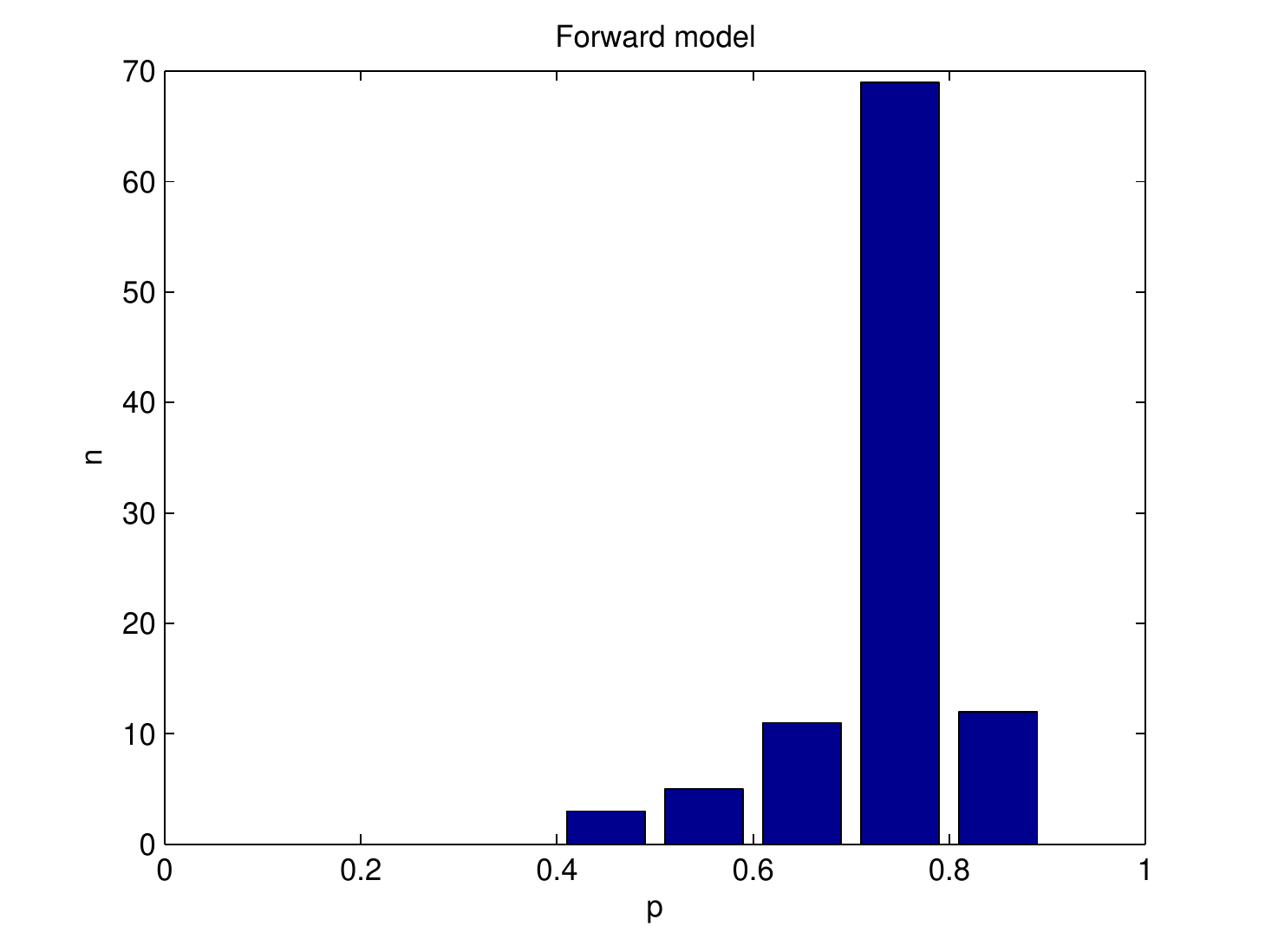} \includegraphics[width=0.32\textwidth]{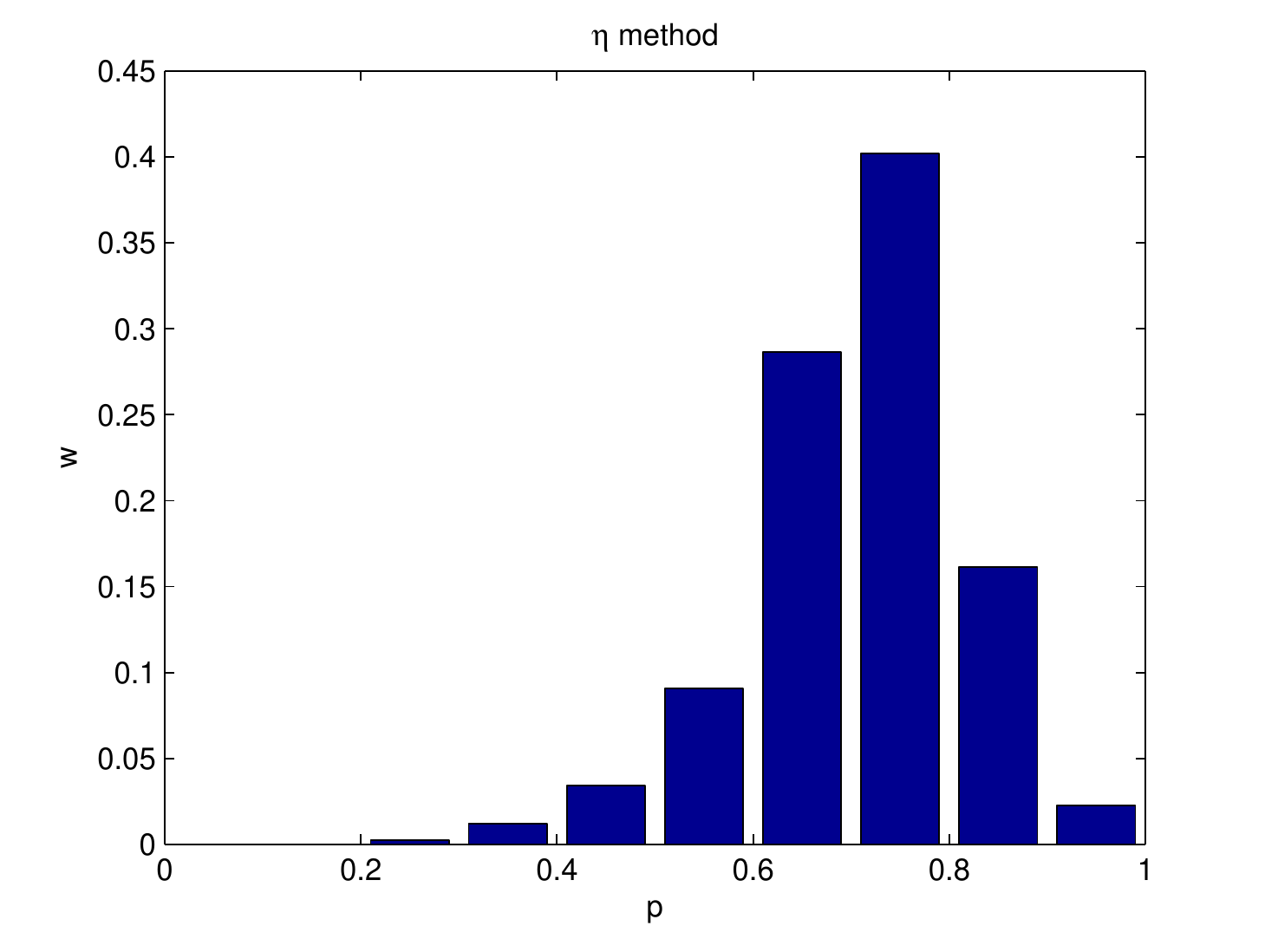} \includegraphics[width=0.32\textwidth]{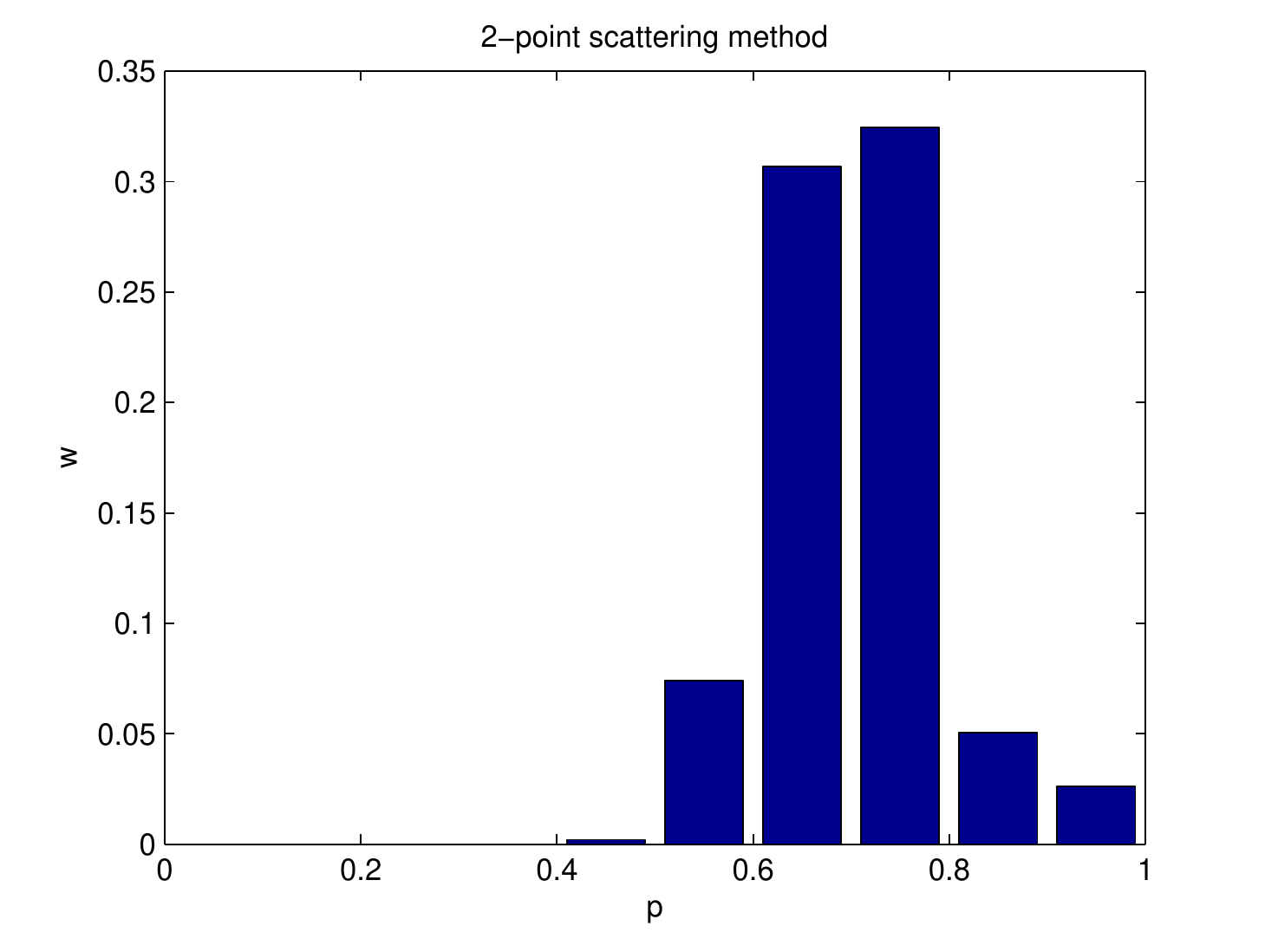} \caption{The $p$-distribution of the forward model (left), and the occupation numbers solved from the inverse problem for the $\eta$ estimation method (middle) and the two-point scattering method (right).} \label{fig:p-only} \end{figure}

We now move on to obtaining the joint $(p, \beta)$-distribution. The simulations were carried out by placing asteroids with a given spin $\beta$-distribution in orbits in the inertial $xy$-plane (the plane of the Earth's orbit), and by using uniformly distributed positions in the orbits for observation epochs. This creates data that contain information on $\beta$ that can be extracted with the $F_{ij}$-functions. We note that our simulation automatically avoided "inverse crime" since the synthetic data were constructed with a model entirely different from the one used in inversion. On the other hand, because of this, the recovered distributions can never be expected to be very accurate.

As the $\beta$-distribution is considerably harder to obtain accurately than the $p$-distribution, more regularization is needed for $\beta$. However, as we are more interested in locating the peaks of $p$- and $\beta$-distributions than obtaining a perfectly smooth solution, we choose fairly small values for our regularization parameters. In our simulations, we choose $\delta_{p} \propto 10^{-1}$ and $\delta_{\beta} \propto 1$. These values have been chosen using a combination of iteration and the standard L-curve method. As for our grid, we choose 20 different values for $p$ and 19 different values for $\beta$, where $p_i \in [ \frac{i-1}{20}, \frac{i}{20} ]$ and $\beta_j \in [\frac{j-1}{19} \frac{\pi}{2}, \frac{j}{19} \frac{\pi}{2} ]$, $i=1$, $\ldots$, $20$ and $j=1$, $\ldots$, $19$. Hence, we have $20 \cdot 19$ bins of $(p_i, \beta_j)$ in total. Using a population of 100 asteroids, the function series $\sum_{i, \, j} w_{ij} F_{ij}(A)$ from Eq.\ \eqref{Fijeq} has been plotted in the same plot with the CDF $C(A)$ in Fig.\ \ref{fig:fitting02}. With a large population, the function series with the occupation numbers solved from the inverse problem creates an excellent fit to the CDF. 

In our simulations, we tested mainly the resolution level of the data rather than tried to create physically realistic cases (which are smoother than those simulated here). We first attempted to recover information on a simple DF $f(p, \beta)$ with only one peak. A contour plot of the forward model and the $(p, \beta)$-distribution obtained from solving the occupation numbers from the inverse problem are shown in Fig.\ \ref{fig:contour01}. The solution is well acceptable as the location of the peak has been recovered in both $p$- and $\beta$-directions, despite some inevitable blurring.

To study the accuracy of the method in finding DF details, we consider DFs with bimodal distributions in $p$ and $\beta$. 
A contour plot of the actual joint distribution and the obtained solution are shown in Fig.\ \ref{fig:contour02}.
As could be expected, the solution for the parameter $p$ is more robust out of the two parameters. It is possible to obtain the distribution for $\beta$ with moderate accuracy, and even if the actual positions of the observed peaks have been slightly shifted, the existence of the peaks is confirmed in the obtained solution.

We experimented with the two-point scattering data as well, but we found that, due to the effectively linear dependence of the basis functions, virtually no information on the joint $(p, \beta)$-distribution could be obtained. Adding prior information of the joint distribution did not help, either, as it resulted in too heavy regularization, causing the solution to become almost entirely prior-based. Therefore, we conclude that, in the case of two-point scattering, the minimal number of $\phi$ angles and the model error cause the problem to be too unstable, preventing the possibility to obtain $\beta$-information.

\begin{figure}[!ht] \centering \includegraphics[width=0.495\textwidth]{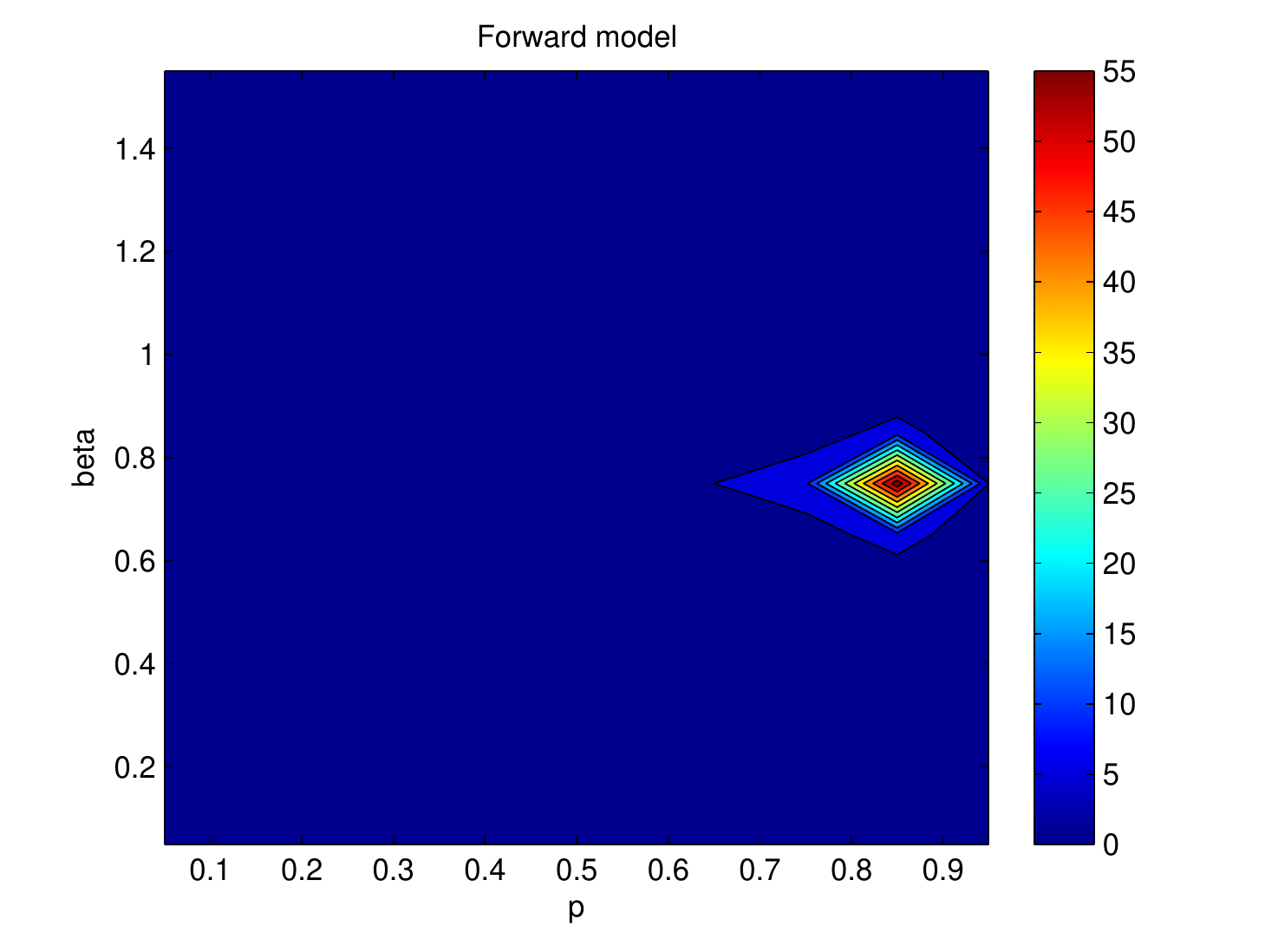} \centering \includegraphics[width=0.495\textwidth]{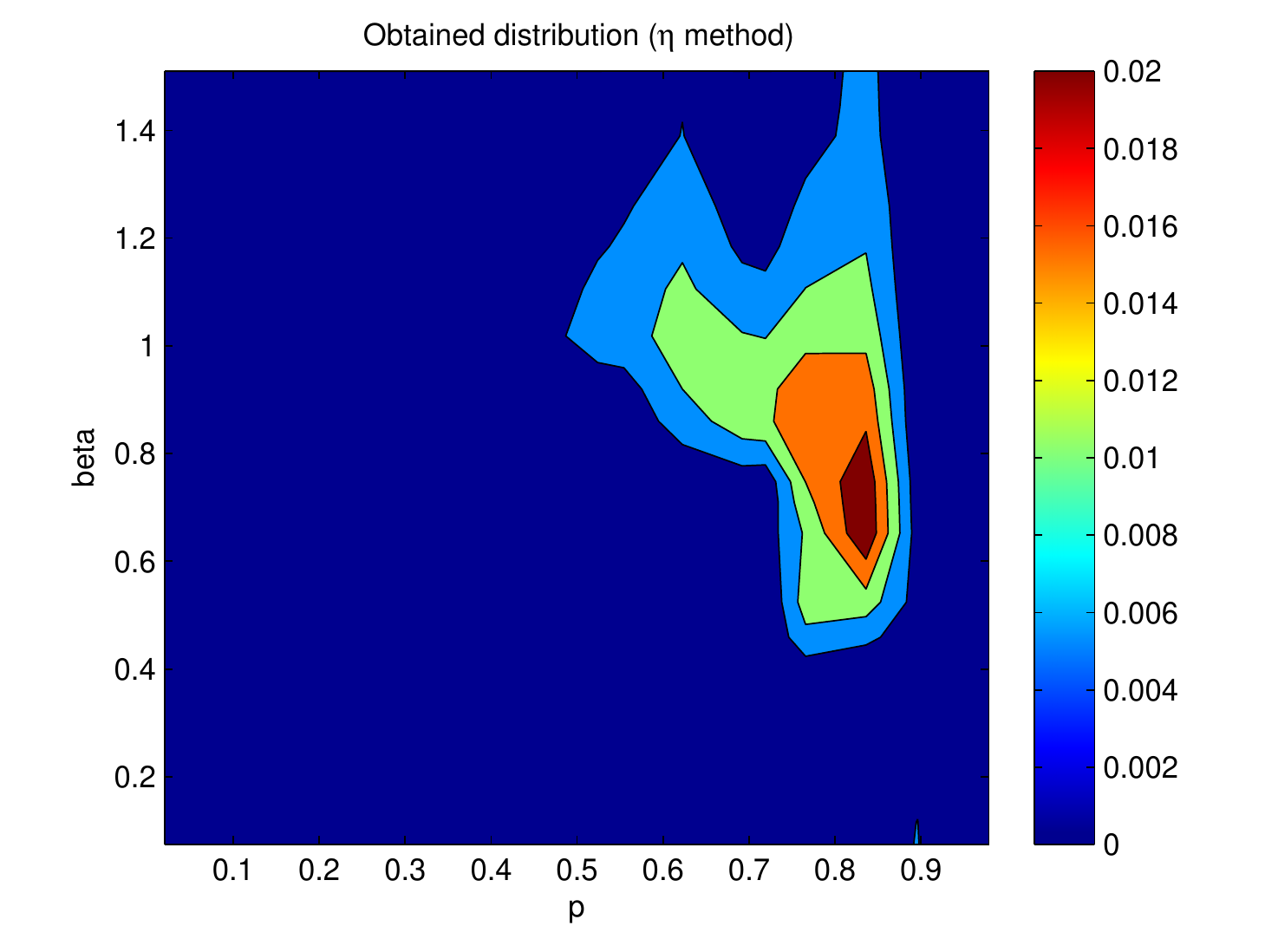} \caption{A contour plot of the actual $(p, \beta)$-distribution (left), and the solution of the inverse problem (right). The ``height'' of the contour represents the occupation numbers of the parameters $p$ and $\beta$, presented as weights on the right.} \label{fig:contour01} \end{figure}

%\begin{figure}[!ht] \centering \includegraphics[width=0.495\textwidth]{p-dist} \includegraphics[width=0.495\textwidth]{beta-dist} \caption{The actual $p$- and $\beta$-distributions in a population of $100$ asteroids. These values are used for generating the CDF $C(A)$ of the direct problem.} \label{fig:orig-distr} \end{figure}

\begin{figure}[!ht] \centering \includegraphics[width=0.495\textwidth]{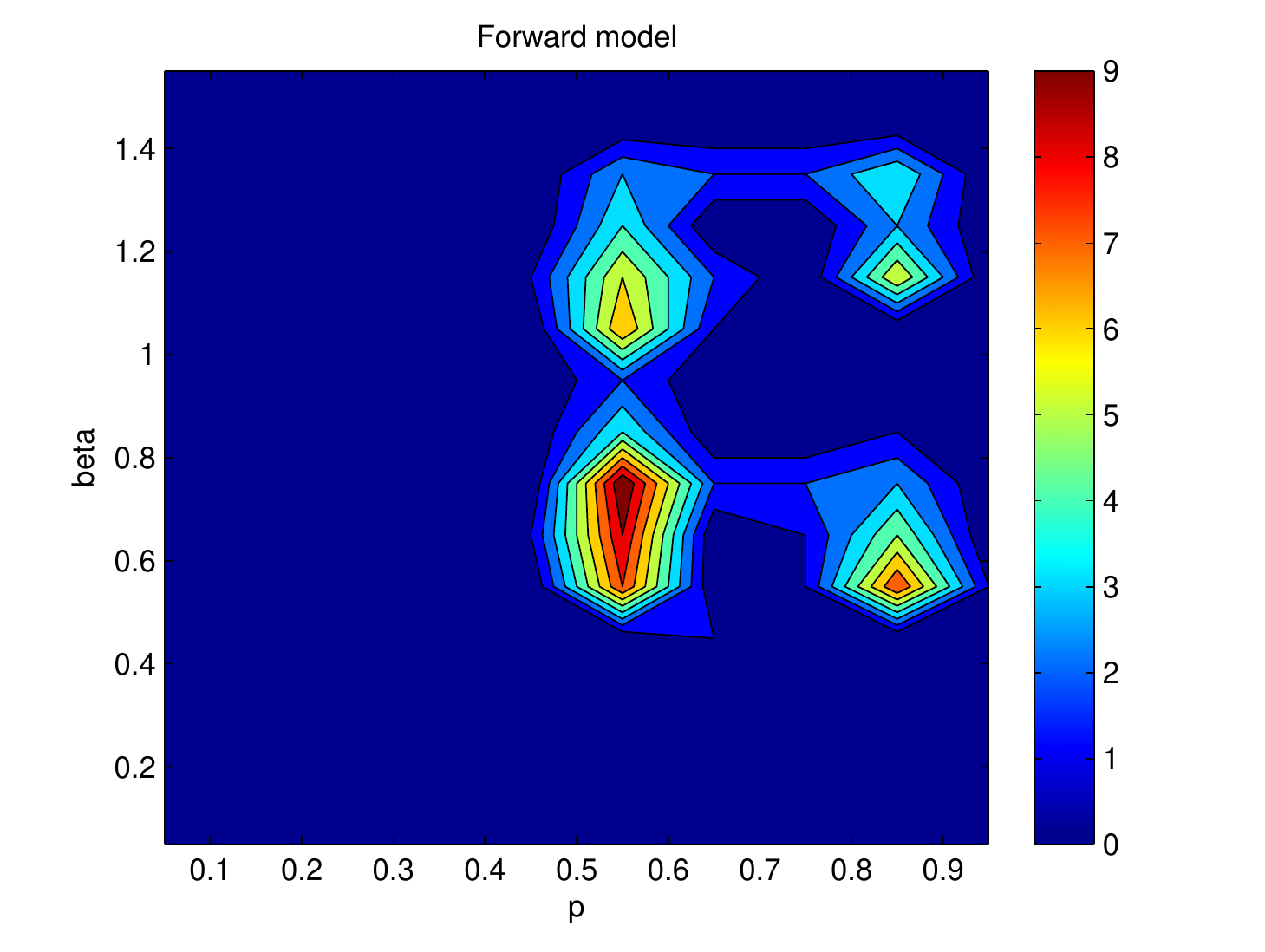} \centering \includegraphics[width=0.495\textwidth]{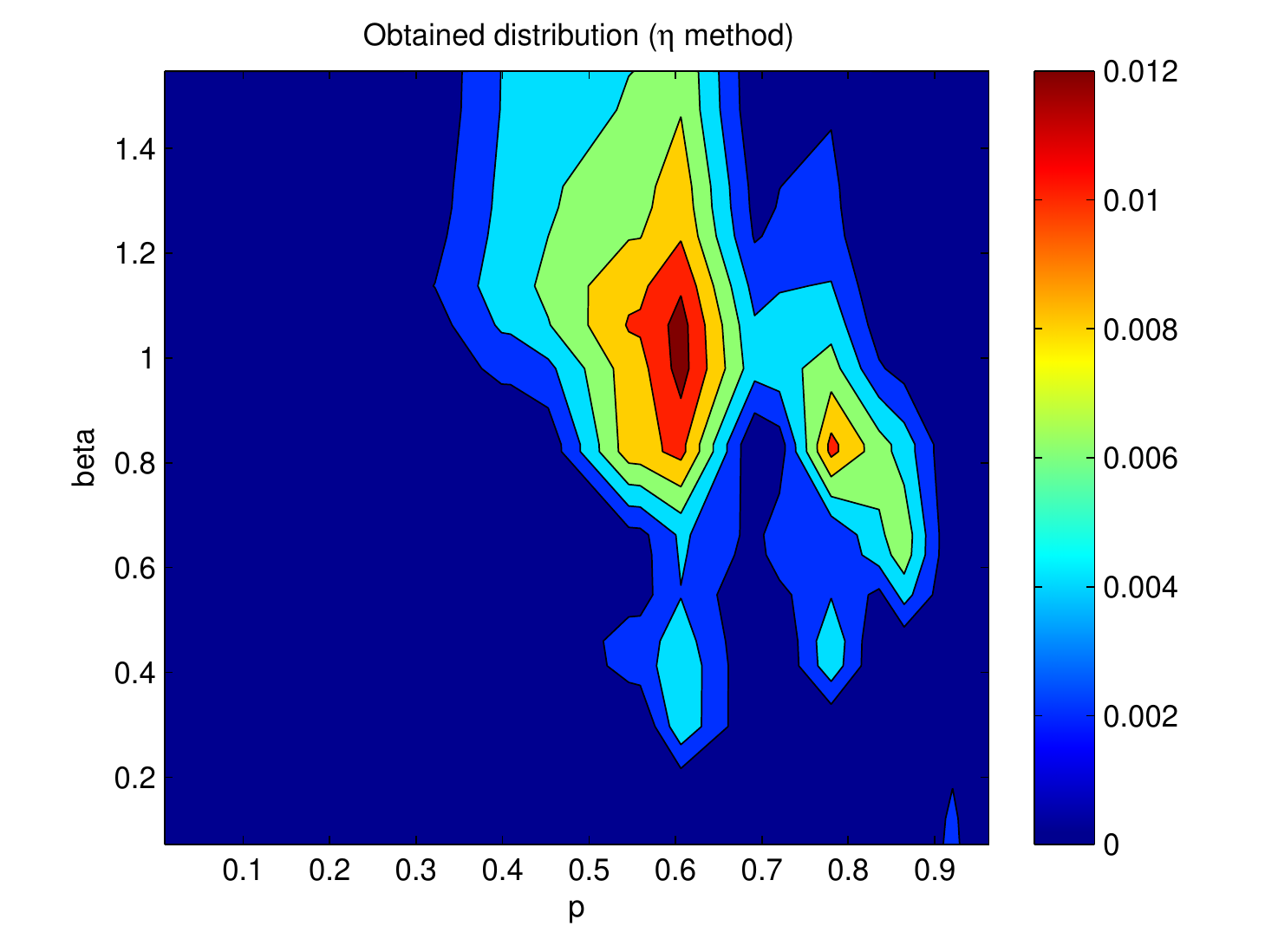} \caption{A multi-peaked contour plot of the actual $(p, \beta)$-distribution (left), and the solution of the inverse problem (right). } \label{fig:contour02} \end{figure}

\section{Conclusions and discussion}

We find that the simple ellipsoid model is, perhaps somewhat surprisingly, viable for estimating the basic shape and spin characteristics of object (asteroid) populations when the size of the population is sufficiently large (at least of order hundreds of targets or observations at various aspect angles $\theta$). We have presented a theoretical study of the shape and spin parameters and functions as well as a number of corresponding observables, and proven the uniqueness and stability properties of the inverse problem. The theoretical analysis was supported by simulations that confirmed the feasibility of our approach, and also showed that it is possible to obtain the distribution of the spin latitude parameter $\beta$ when there are enough observations per target, although this is less accurate than for the shape elongation $p$. If there are only two observations per target, information on $\beta$ is essentially lost. We have also given closed-form expressions of the basis functions of the two-point case derived in \cite{Szabo} by random sampling.

One can devise other observables in addition to those presented here. The main limitation is that, in practice, the measurements should consist of measures of brightness variation for single targets at fixed $\theta$ to remain invariant of scaling and other modelling aspects. One possibility is the derivative of the brightness variation (two measurements within a short time interval). However, in the inverse problem the derivative is a nonunique mixture of shape elongation and rotation rate (and rotation axis latitude), so the result is almost entirely dependent on what one chooses for prior distributions. For example, one could fix the elongation to one value and explain any observed derivative distribution by rotation rate distribution only.

The physical realization of the observables occurs mostly in large sky surveys and long observation campaigns. In these, one can often model individual targets even with of order one hundred data points as long as they cover a long time interval \cite{sparse}, but for many targets one only obtains a few data points that can be analyzed with the methods discussed here. In a study to be published elsewhere, we apply our methods to various real datasets of sufficient size for distribution analysis, and investigate the related practical issues by realistic simulations and tests.

\subsection*{Acknowledgements}
We thank Josef \v{D}urech and Riwan Kherouf for valuable discussions. This work was supported by the Academy of Finland (the centre of excellence in inverse problems).

\section*{Appendix}
\setcounter{section}{0}
\begin{lemma}
 If $f(p)$ and $C(A)$ are given as polynomials, $C(A)$ starts at the third degree, the polynomial coefficients of $f(p)$ are uniquely derivable from those of $C(A)$, and the attenuation factor of the polynomial coefficients of $f(p)$, when mapped to those of $C(A)$, is essentially inversely proportional to their degree.
\end{lemma}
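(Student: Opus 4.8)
The plan is to make the transform in \eqref{CAp} explicit on the monomial basis and read off a triangular, degree-graded coefficient map. Writing $f(p)=\sum_k c_k p^k$ and substituting $p=A\sin\psi$ (the natural change of variable absorbing the turning point $p=A$ of $\sqrt{A^2-p^2}$), the image of a single monomial becomes
\[
\int_0^A p^k\sqrt{\frac{A^2-p^2}{1-p^2}}\,dp
=A^{k+2}\int_0^{\pi/2}\frac{\sin^k\psi\,\cos^2\psi}{\sqrt{1-A^2\sin^2\psi}}\,d\psi .
\]
Expanding $(1-A^2\sin^2\psi)^{-1/2}$ by the binomial series and integrating term by term (legitimate for $A<1$) gives a power series in $A$ of the form $A^{k+2}\sum_{m\ge0}a_m\beta_{k,m}A^{2m}$, with $a_m=\binom{2m}{m}4^{-m}$ and $\beta_{k,m}=\int_0^{\pi/2}\sin^{k+2m}\psi\cos^2\psi\,d\psi>0$. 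The essential structural facts are that $p^k$ is sent to a series whose lowest power is $A^{k+2}$ and which contains only powers of the same parity as $k$. In particular the only contribution to the $A^2$-coefficient of $C$ comes from $c_0$ (with weight $\pi/4$), so the hypothesis that $C$ begins at third degree is exactly the statement $c_0=0$, i.e. $f(0)=0$.

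First I would use this to get uniqueness. Collecting powers, the coefficients $d_n$ of $C(A)=\sum_n d_nA^n$ satisfy $d_n=\sum_{2m+k=n-2}a_m\beta_{k,m}c_k$, which couples $d_n$ only to $c_{n-2}$ and to lower-index coefficients $c_{n-4},c_{n-6},\dots$ of the same parity. Ordered by degree this is a lower-triangular system that moreover splits into independent even and odd blocks; since the diagonal entry $a_0\beta_{n-2,0}=\beta_{n-2,0}$ is strictly positive, the system is invertible by back-substitution, so the $c_k$ are uniquely determined by the $d_n$. Given $C$ as a polynomial of degree $N$, this recovers $f$ as a polynomial of degree $N-2$ (the highest terms of $f$ produce powers of $A$ above degree $N$ that are discarded as the approximation tail, consistent with the ``arbitrary precision'' expansion invoked in the theorem).

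Then I would quantify the conditioning through the diagonal. Using $\beta_{k,0}=\int_0^{\pi/2}\sin^k\psi\cos^2\psi\,d\psi=W_k-W_{k+2}$ with $W_k$ the Wallis integral, the recursion $W_{k+2}=\tfrac{k+1}{k+2}W_k$ gives the clean identity $\beta_{k,0}=W_k/(k+2)$; with $k=n-2$ the forward map attenuates the leading coefficient by $W_{n-2}/n$. Since $W_{n-2}$ varies only slowly (like $n^{-1/2}$), the dominant degree-dependence of the attenuation is $1/n$, so inversion amplifies a coefficient error by a factor growing essentially like its degree, which is the asserted moderate (polynomial, here linear) ill-posedness. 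The main obstacle I anticipate is not the uniqueness, which follows cleanly from triangularity, but making the conditioning claim precise: for a triangular operator the inverse can in principle amplify more than the reciprocal of the diagonal, so one must check that the off-diagonal weights $a_m\beta_{k,m}$ ($m\ge1$), which couple only downward to already-resolved lower-degree coefficients, do not accumulate and spoil the per-degree estimate. Because the coupling is strictly degree-lowering within each parity block, the leading amplification is governed by the diagonal, which is why the sharp statement is naturally the loose ``essentially inversely proportional to the degree'' rather than one with an explicit constant.
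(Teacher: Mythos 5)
Your proof is correct and lands on exactly the same triangular, parity-split coefficient system as the paper, but you get there by a more elementary and self-contained route. The paper expands $f(p)=\sum_n c_n p^n$, quotes the integral $\int_0^A p^n\sqrt{(A^2-p^2)/(1-p^2)}\,dp$ from tables as an Appell $F_1$, converts it to a Gauss $_2F_1$ whose even-power series $G_n(A)=\sum_j b_j^n A^{2j}$ plays the role of your $\sum_m a_m\beta_{n,m}A^{2m}/\beta_{n,0}$, and reads off the recursion $c_n=k_n^{-1}(a_n-\sum_i c_{n-2i}k_{n-2i}b_i^{n-2i})$. Your substitution $p=A\sin\psi$ plus the binomial series replaces the special-function machinery entirely, and your diagonal entries agree with the paper's: $\beta_{n,0}=\tfrac{1}{2}B(\tfrac{n+1}{2},\tfrac{3}{2})=\tfrac{\sqrt{\pi}}{4}\Gamma(\tfrac{n+1}{2})/\Gamma(\tfrac{n+4}{2})=k_n$, so the two attenuation factors are literally the same quantity. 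What your version buys is the clean closed form $k_n=W_n/(n+2)$ via the Wallis recursion, and an actual explanation of the hypothesis ``$C(A)$ starts at the third degree'' as the condition $c_0=f(0)=0$ (the paper simply starts the sum at $n=1$ without comment). Two small points of honesty on both sides: since $W_n\sim\sqrt{\pi/(2n)}$, the true asymptotic of $k_n$ is $n^{-3/2}$ rather than $n^{-1}$, so ``essentially inversely proportional to the degree'' is a loose gloss in your write-up and in the paper alike (the paper patches this with an ad hoc logarithmic correction valid for $n<100$); and your caveat that a triangular inverse can amplify beyond the reciprocal of its diagonal is well taken --- the paper does not address it either, so neither argument fully pins down the stability claim beyond the diagonal heuristic.
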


\begin{proof}
Let us expand $f(p)$ as
$$
f(p)=\sum_{n=1}^\infty c_n p^n;\quad p\in[0,1].
$$
For isotropic $\theta$,
$$
C(A)=\sum_n c_n \int_0^A p^n\frac{\sqrt{A^2-p^2}}{\sqrt{1-p^2}}\, dp,
$$
and from tables of integrals we find that this is
$$
C(A)=A^2\sum_n c_n A^n \frac{1}{n+1} F_1(\frac{n+1}{2};\frac{1}{2},-\frac{1}{2};\frac{n+1}{2}+1;A^2,1),
$$
where $F_1$ is the Appell hypergeometric function. This form can be transformed into the usual Gauss hypergeometric function $_2F_1$ so that
$$
C(A)=A^2\sum_n c_n A^n k_n G_n(A),
$$
where
$$
G_n(x)=\ _2F_1(\frac{n+1}{2},-\frac{1}{2};\frac{n+4}{2};x^2)=\sum_j^\infty b_j^n x^{2j},
$$
with
$$
b_j^n=\frac{(\frac{n+1}{2})_j(-\frac{1}{2})_j}{j!(\frac{n+4}{2})_j}; \quad (a)_j=\frac{\Gamma(a+j)}{\Gamma(a)}
$$
(so $(a)_0=1=b_0^n$), and
$$
k_n=\frac{\sqrt{\pi}}{2(n+1)}\frac{\Gamma(\frac{n+3}{2})}{\Gamma(\frac{n+4}{2})},
$$
so $k_n\ne 0$ decreases monotonously as $n$ increases, and $\lim_{n\rightarrow\infty} k_n=0$. The decrease is moderate, approximated by, e.g., $\sim (n+1)^{-1}[\log (n/2+3)]^{-3/2}$ for $n<100$.
For the gamma function, $\Gamma(n+1/2)=\sqrt{\pi}(2n-1)!!/2^n$ and $\Gamma(n)=(n-1)!$.

Suppose the observed $C(A)$ is expanded (to hold for $0\le A\le 1$) as
$$
C(A)=A^2\sum_{n=1}^\infty a_n A^n.
$$
Then
$$
a_1=c_1k_1\Rightarrow c_1=a_1/k_1;\quad c_2=a_2/k_2;\quad a_3=c_3k_3+c_1k_1b_1^1\Rightarrow c_3=(a_3-c_1k_1b_1^1)/k_3,
$$
and so on recursively; i.e.,
$$
c_n=\frac{1}{k_n}(a_n-\sum_{i=1}^{[n]-1} c_{n-2i} k_{n-2i} b^{n-2i}_i),
$$
where $[n]$ is $(n+1)/2$ or $n/2$ for, respectively, odd or even $n$. 
\end{proof}

\begin{remark} There is a one-to-one mapping between the polynomial coefficients
determining $f(p)$ and $C(A)$.
If $p\in[0,1[$, and we expand (assuming $f(p)$ to vanish fast enough when $p\rightarrow 1$)
$$
\frac{f(p)}{\sqrt{1-p^2}}=\sum_{n=1}^\infty d_n p^n,
$$
we have
$$
C(A)=\sum_n d_n \int_0^A p^n \sqrt{A^2-p^2} \,dp=
A^2\sum_n d_n A^n\frac{_2F_1(-\frac{1}{2},\frac{n+1}{2};\frac{n+3}{2};1)}{n+1},
$$
which is simply
$$
C(A)=A^2\sum_n d_n k_n A^n,
$$
so we obtain
$$
d_n=\frac{a_n}{k_n}.
$$
\end{remark}

\begin{remark}
The uniqueness and stability results hold for the general triaxial ellipsoid as well.  
Let us now have a fixed $c\ne1$, $b=1$, and $a=1/p$. Then
$$
A^2=\frac{p^2\sin^2\theta+c^{-2}\cos^2\theta}{\sin^2\theta+c^{-2}\cos^2\theta},
$$
so the iso-$A$ curves are given by
$$
\cos^2\theta_{A3}(p):=g_{A3}(p)=\frac{A^2-p^2}{h(A)-p^2},
$$
where
$$
h(A):=A^2(1-c^{-2})+c^{-2}.
$$
Now
$$
C(A)=\sum_n c_n \int_0^A p^n\sqrt{g_{A3}(p)}\, dp,
$$
and this is
$$
C(A)=\frac{A^2}{\sqrt{h(A)}}\sum_n c_n A^n \frac{1}{n+1} F_1(\frac{n+1}{2};\frac{1}{2},-\frac{1}{2};\frac{n+1}{2}+1;\frac{A^2}{h(A)},1).
$$
This can be used to define a series expansion for the observed $C(A)$ with new basis functions instead of polynomials, so we have the same kind of one-to-one correspondence as above.
\end{remark}


\begin{thebibliography}{}


\bibitem{con}
\newblock R. Connelly and S. Ostro,
\newblock \emph{Ellipsoids and lightcurves},
\newblock Geometriae Dedicata {\bf 17} (1984), 87-98.

\bibitem{nea}
\newblock M. Kaasalainen et al.
\newblock \emph{Photometry and models of eight near-Earth asteroids},
\newblock Icarus {\bf 167} (2004), 178-196.

\bibitem{genproj}
\newblock M. Kaasalainen and L. Lamberg, 
\newblock \emph{Inverse problems of generalized projection operators}, 
\newblock Inverse Problems {\bf 22} (2006), 749-769.

\bibitem{sparse}
\newblock M. Kaasalainen,
\newblock \emph{Physical models of large number of asteroids from calibrated photometry sparse in time},
\newblock Astron. Astrophys. {\bf 422} (2004), L39.

\bibitem{dyn}
\newblock M. Kaasalainen, 
\newblock \emph{Dynamical tomography of gravitationally bound systems}, 
\newblock Inverse Problems and Imaging {\bf 2} (2008), 527-546.

\bibitem{Szabo}
\newblock G. Szab\'o and L. Kiss,
\newblock \emph{The shape distribution of asteroid families: Evidence for evolution driven by small impacts},
\newblock ScienceDirect, Icarus {\bf 196} (2008), 135-143.


\end{thebibliography}
\end{document}